\newtheorem{lemma}{Lemma}
\newenvironment{proof}[1][Proof]{\noindent\textbf{#1.} }{\ \rule{0.5em}{0.5em}}
\newcommand{\Ind}{\mathds{1}}
\newcommand{\ZZ}{\mathbb{Z}}
\newcommand{\RR}{\mathbb{R}}
\newcommand{\eps}{\varepsilon}
\newcommand{\Unif}{\mathrm{Uniform}}
\def\EE{\mathbb{E}}
\newcommand{\m}{\mathcal}
\NewDocumentCommand{\DIP}{e{^_}}{D^{\mathrm{IP}\IfValueT{#1}{,#1}}_{\IfValueT{#2}{#2}}}
\NewDocumentCommand{\DMM}{e{^_}}{D^{\mathrm{MM}\IfValueT{#1}{,#1}}_{\IfValueT{#2}{#2}}}
\DeclareMathOperator*{\argmin}{\arg\!\min}
\newcommand{\Or}[1]{\textcolor{cyan}{}}
\begin{document}
\title{High-Rate Nested-Lattice Quantized Matrix Multiplication with Small Lookup Tables} 

\author{%
  \IEEEauthorblockN{Iris~Kaplan}
 \IEEEauthorblockA{Computer Science and Engineering\\
                    Hebrew University of Jerusalem\\
                    Jerusalem, Israel\\
                    Email: iris.kaplan1@mail.huji.ac.il}  
    \and
      \IEEEauthorblockN{Or~Ordentlich}
  \IEEEauthorblockA{Computer Science and Engineering\\
                    Hebrew University of Jerusalem\\
                    Jerusalem, Israel\\
                    Email: or.ordentlich@mail.huji.ac.il}  
}


\maketitle


\begin{abstract}
Recent work have shown that the quantization for matrix multiplication problem can be optimally solved by quantizing each column in each matrix using a nested lattice code, and then multiplying the de-quantized matrices. It was further demonstrated that when product codes of sub-dimension $d$ and rate $R$ are used, the de-quantization and inner product operations can be implemented with querying a lookup table (LUT) of size $2^{2dR}$, but this is only useful when $dR$ is sufficiently small. This in turn limits LUT-based inner product decoding to low-rate quantizers. In this work, we develop a rate $R$ hierarchical nested lattice quantization framework, which quantizes each vector to $M$ layers, and admits LUT-based inner product decoding using an LUT of size $2^{2d\frac{R}{M}}$, allowing for high-rate quantization. We provide analytic bounds on the loss of the developed scheme compared to standard nested lattice quantizers, and also numerically illustrate that this loss is negligible. Thus, our scheme enables to use small LUTs without compromising the overall distortion.
\end{abstract}

\section{Introduction}

Matrix multiplication constitutes the major part of the workload associated with inference in deep neural nets (DNNs) and large language models (LLMs). In order to compute $A^\top B$ for two matrices $A\in\RR^{n\times a}$ and $B\in \RR^{n\times b}$ and store the result, one needs to fetch/store $n(a+b)+ab$ entries from/to memory, whereas the computation requires $2nab$ operations. Modern hardware has become so efficient in performing multiplications and additions, that it is often the memory bandwidth that forms the bottleneck in matrix multiplication, especially when $a$ or $b$ are small. To remedy the limitations posed by the memory bandwidth, much research in the AI community over the last decade was dedicated to reducing the IO burden by compressing/quantizing the entries of one or both matrices $A,B$~\cite{gong2014compressing,jacob2018quantization,dettmers2022gpt3,yao2022zeroquant,xiao2023smoothquant,ma2024era,tseng2024quip,tseng2024qtip,quarot2024,guo2016dynamic,frantar2023optq}. See~\cite[Section I.A]{ordentlich2024optimal} for a discussion on what quantization rates are required for fully utilizing the compute cores. As it turns out, LLMs in the generation phase are typically memory-limited, and any reduction in the quantization rate speeds up the inference time proportionally. While on modern GPUs the memory limitation is so severe, that one can afford to spend a few cycles on de-quantization without affecting inference time, in modern CPUs the memory limitation is less acute, and often de-quantization must be highly efficient in order to result in shorter inference time.

Motivated by the above, this work develops a quantization scheme for matrix multiplication with fast decoding, which relies on using lookup tables (LUT) rather than first de-quantizing the elements of $A, B$ and computing their product. In particular, we rely on the nested-lattice based quantization for matrix multiplication scheme from~\cite{ordentlich2024optimal}, which was shown to be worst-case optimal when high-dimensional quantizers are used. A low-complexity variant of that scheme, based on the product of Voronoi codes~\cite{ConwaySloane83} in $\RR^d$, where $d$ is small, say $3\leq d\leq  8$, was further developed in~\cite{ordentlich2024optimal}, and was numerically shown to perform quite close to the fundamental limit, and attain state-of-the-art results for quantized LLMs~\cite{NestQuant}. While the encoding in this scheme requires to perform nearest neighbor decoding to the base lattice $L\subset\RR^d$, it was pointed out that decoding can be performed via repeated access to an LUT with $2^{2dR}$ entries, that stores all possible inner products between two vectors in the rate-$R$ Voronoi code. Decoding using LUTs is highly appealing (at least on a CPU), but only if the LUT is small enough to be stored in the fastest cache (L1 cache), which poses a significant constraint on the dimension-rate product $dR$ (say, $dR\leq 8-9$ for modern CPUs).

The goal of this work is to circumvent this limitation, and allow for nested-lattice quantization for matrix multiplication which admits efficient LUT-based decoding even for high quantization rate $R$. To this end, we develop an hierarchical nested lattice quantization scheme. Our scheme quantizes each vector in $\RR^d$ to $M\geq 1$ layers, and decoding of the inner product between two quantized vectors in $\RR^d$ only requires $M^2$ queries to a single LUT with $2^{2d\frac{R}{M}}$ entries. We show that despite the reduction of the required LUT size, our scheme's performance is extremely close to that of a Voronoi code with the same rate.

A Python implementation for quantized matrix multiplication using our scheme is available in~\cite{IrisPy}, whereas 
an efficient C implementation is available in~\cite{OriC}.

\textbf{Related Work:} Our construction falls within the framework of successive refinement~\cite{equitz1991successive}, which is also often referred to as \emph{embedded codes} or \emph{residual vector quantization} in the literature. The quadratic Gaussian rate-distortion problem is known to be successively refinable, that is, successive refinement codes can achieve the optimal rate-distortion tradeoff. Several image compression algorithms, such as Embedded Zerotree Wavelet  (EZW)~\cite{shapiro1993embedded} and Set Partitioning in Hierarchical Trees (SPIHT)~\cite{said1996new}, include a successive refinement component where first the MSBs of certain coefficients are sent, and the LSBs are sent afterwards. In an attempt to extend the MSB-to-LSB "scalar-quantization" refinement used by these algorithms to lattice-based vector quantization, ~\cite{mukherjee2002successive} and~\cite{fuchs2013embedded} constructed lattice-based successive refinement schemes in the same spirit as ours. However,~\cite{fuchs2013embedded} only considered layers with rate $R=1$ bit, whereas the encoding/decoding in~\cite{mukherjee2002successive} did not fully exploit the algebraic structure of nested lattice codes. Furthermore, the encoding in~\cite{mukherjee2002successive,fuchs2013embedded} is top-bottom, as is usually the case for successive refinement: one first quantizes the source to the coarsest lattice, and then uses a finer lattice to quantize the residual quantization error, and so on. Our scheme, on the other hand, quantizes the source to the finest lattice, and describes the obtained point as a coset of a coarse lattice whose index grows with the number of layers. As a consequence, we are able to obtain rigorous bounds on the performance of our hierarchical scheme compared to a single-layer Voronoi code of the same rate.

Using a Cartesian product of low-dimensional quantizers for quantizing a high-dimensional vector is a standard practice from the first days of digital communication~\cite{gray1998quantization},~\cite[Chapter 12.7]{GershoGrayBook}.
More recently, product codes for fast computation of approximate inner product or Euclidean distance were heavily studied within the context of approximate nearest neighbor (ANN) search and information retrieval. The idea of partitioning vectors to small chunks, building a quantizer for each chunk, and constructing corresponding LUTs for fast inner product computations was developed in~\cite{jegou2010product}. The work~\cite{jegou2010product} inspired a huge body of follow-up work. Among the many extensions studied, the combination of product codes with additive quantization seems to be closest to the approach we take in this paper. In additive quantization~\cite{babenko2014additive} one constructs the quantizer $Q:\RR^d\to [\prod_{m=1}^M K_m]$ as the Minkowski sum $\m{C}_1+\cdots+\m{C}_M$ of $M$ codebooks $\m{C}_1,\ldots,\m{C}_M$ of sizes $K_1,\ldots,K_M$, respectively. Typically, those codebooks are learned from the data using variants of Loyd's algorithm/K-means, and are therefore unstructured, which makes the encoding/decoding quite challenging. Moreover, in general product+additive quantization requires storing many different codebooks and LUTs. Our scheme uses the same lattice for all chucks and all sub-codebooks, and a \emph{single} lookup table for all operations involved in the inner product computation. For codebooks learned via $K$-means, the entries of the LUTs must be quantized as well, which may degrade performance~\cite{blalock2017bolt}.
In contrast, for standard choices of the base lattice, e.g., the $D_n$ or $E_n$ families, the possible inner products will further be integer-valued, so each entry of the LUT can be efficiently stored. Product codes with LUT based inner product decoding for machine learning applications was considered in~\cite{blalock2021multiplying}.

\section{Hierarchical Nested-Lattice Quantizers}

We first review some basic lattice definitions. See~\cite{ramiBook} for a comprehensive treatment of lattices in information theory. For a lattice $L\subset\RR^d$ we define the nearest neighbor quantizer $Q_L:\RR^d\to L$ as
\begin{align}
Q_{L}(x)=\argmin_{\lambda\in L}\|x-\lambda\|, 
\label{eq:QLdef}
\end{align}
where ties are broken arbitrarily, but in a systematic manner. The Voronoi region $\m{V}_L$ is defined as the set of all points in $\RR^d$ that are closer to $0$ than to any other lattice point
\begin{align}
\m{V}=\m{V}_L=\left\{x\in\RR^d~:~Q_L(x)=0\right\}.    
\label{eq:Vorodef}
\end{align}
Any lattice $L\subset \RR^d$ has a (non-unique) generating matrix $G\in\RR^{d\times d}$ such that $L=G\ZZ^d$. 
Let $Z\sim\Unif(\m{V}_L)$ be a random vector uniformly distributed over the Voronoi region of $L$. We define the second moment of the lattice $L$ as $\sigma^2(L)=\frac{1}{d}\EE\|Z\|^2$.
For any natural number $r$ we have that $r L\subset L$ and we construct the nested lattice quantizer/Voronoi code~\cite{ConwaySloane83}
\begin{align}
\m{A}_r=L\cap(r\m{V}),    
\end{align}
which is equivalent to the the quotient group $L/rL\cong (\ZZ/r\ZZ)^d$. This algebraic structure enables to use Voronoi codes as quantizers with fast encoding and decoding schemes. See~\cite{ConwaySloane83} and also~\cite[Algorithms 1 and 2] {ordentlich2024optimal}. In particular, these algorithms encode a vector $x\in\RR^d$ to $d\log_2 r$ bits, and based on those bits, the decoder outputs $\hat{x}\in\m{A}_r$, where $\hat{x}=Q_L(x)$ whenever $Q_L(x)\in r\m{V}$. The event $\hat{x}\neq Q_L(x)$ is called an \emph{overload} event.  The quantization rate of this scheme is $R=\log_2(r)$. 

Note that for Voronoi codes, the tie-breaking in~\eqref{eq:QLdef}, which affects the boundary of $\m{V}$ defined in~\eqref{eq:Vorodef}, is highly important, since for an even integer $r$ we will always have points of $L$ on the boundary of $r\m{V}$. To circumvent the numerical difficulties associated with treating the boundary, one can fix some very small and unstructured\footnote{In particular, we would like to avoid the situation where $\eps$ is contained in one of the hyperplanes defining the boundary of $\m{V}$.} perturbation vector $\eps\in\RR^d$ and replace the nearest neighbor quantizer $Q_L(x)$ with $Q_L(x+\eps)$ everywhere.

Our end goal in this paper is to compute inner products based on Voronoi codes and LUTs. To that end we can pre-compute all inner products of pairs of points $\lambda_i,\lambda_j\in\m{A}_r$ and store them in an LUT of size $r^{2d}B=2^{2dR}B$-bytes, assuming we use $B$ bytes to represent the value of each inner product. In order to allow fast access to the LUT, it must be small enough to fit in the L1 cache. This constrains the product $dR$, and typical numbers (depending on the processing unit that is used) are $dR\leq 8-9$. Our goal is to facilitate the use of Voronoi codes with LUT-based inner product decoding, while allowing arbitrarily large quantization rate $R$. This will be enabled via hierarchical nested-lattice quantizers.

\subsection{Proposed Hierarchical Nested-Lattice Quantizers}

\begin{algorithm}
\caption{Hierarchical Nested-Lattice Encoder}\label{alg:cap}
\begin{algorithmic}
\State \textbf{Inputs:} $x \in \mathbb{R}^d$, Lattice $L \subset \mathbb{R}^d$ with generating matrix $G \in \mathbb{R}^{d \times d}$, nesting ratio $q \in \mathbb{N}$, hierarchy depth $M \in \mathbb{N}$
\State \textbf{Outputs:} Encoding vectors $b_0, b_1, \dots, b_{M-1} \in [q]^d$
\State $\tilde{g} \gets x$ 
\For{$m = 0$ to $M-1$}
    \State $\tilde{g} \gets Q_L(\tilde{g})$ 
    \State $b_m \gets [G^{-1} \cdot \tilde{g}] \mod q$ 
    \State $\tilde{g} \gets \tilde{g} / q$
\EndFor
\State $\mathrm{OverloadError}=\Ind\{Q_L(\tilde{g})\neq 0\}$ 
\State \Return $b_0,b_1,\ldots,b_{M-1}$
\end{algorithmic}
\end{algorithm}

\begin{algorithm}
\caption{Hierarchical Nested Lattice Decoder}\label{alg:decode}
\begin{algorithmic}
\State \textbf{Inputs:} Encoding vectors $b_0, b_1, \dots, b_{M-1} \in [q]^d$, Lattice $L \subset \mathbb{R}^d$ with generating matrix $G \in \mathbb{R}^{d \times d}$, nesting ratio $q \in \mathbb{N}$, hierarchy depth $M \in \mathbb{N}$
\State \textbf{Output:} Reconstructed vector $\hat{x} \in L$
\State $\hat{x}\gets 0$
\For{$m=0,\ldots,M-1$}
    \State $x_m \gets G \cdot b_m - q\cdot  Q_{L}((G \cdot b_m)/q)$
    \State $\hat{x} \gets \hat{x}+ q^{m} x_m$
\EndFor
\State \Return $\hat{x}$
\end{algorithmic}
\end{algorithm}

For a lattice $L\subset\RR^d$ and a natural number $q$, we denote by $Q_{qL}(\cdot)$ the nearest neighbor quantizer for the lattice $qL$. Note that $Q_{qL}(x)=q\cdot Q_L(x/q)$.
For a non-negative integer $m$ and $x\in\RR^d$, define 
\begin{align}
Q^{\circ m}(x)&=Q_{L,q}^{\circ m}(x)=\left(Q_{q^m L}\circ Q_{q^{m-1} L}\cdots \circ Q_L\right)(x)\nonumber\\
&=Q_{q^m L}\left(Q_{q^{m-1}L}\left(\cdots\left(Q_L(x)\right)\right)\right).
\end{align}
Note that $Q^{\circ m}(x)$ is not equal to $Q_{q^m L}(x)$ in general, unless $L,q$ satisfy the perfect tiling condition: $(L\cap q\m{V})+\m{V}=q\m{V}$. The perfect tiling condition is met by $L=\ZZ$ with odd $q$, but is rarely met by any lattice in dimensions $d>1$.

Fix a lattice $L$, and two natural numbers $q,M$, and recall that $\m{A}_q=L\cap(q\m{V})$. Our scheme encodes each $x\in\RR^{d}$ to $d\cdot M\log_2(q)$ bits, and based on those bits the decoder outputs a point $\hat{x}$ in the constellation $\m{C}_{L,q,M}=\sum_{m=0}^{M-1} q^{m}\m{A}_q    $
where the sum above is a Minkowski sum. Whenever $Q_L(x)\in \m{C}_{L,q,M}$ we have that the reconstruction produced by our scheme satisfies $\hat{x}=Q_L(x)$. Figure~\ref{fig:codebooks} depicts the constellation $\m{C}_{L,q,M}$ for the hexagonal lattice $L=A_2$, $q=6$ and $M=3$.


To describe our scheme, define $\tilde{g}_m(x)=\frac{Q^{\circ m}(x)}{q^m}$,
and note that $\tilde{g}_m\in L$ by definition, and can be computed recursively as
\begin{align}
\tilde{g}_0(x)=Q_L(x);~~~\tilde{g}_{m}(x)=Q_L\left(\frac{\tilde{g}_{m-1}(x)}{q}\right),~~m=1,\ldots.\nonumber
\end{align}
For $m=0,\ldots,M-1$, let 
\begin{align}
g_m(x)&=Q^{\circ m}(x)-Q^{\circ (m+1)}(x)\nonumber\\
&=Q^{\circ m}(x)-Q_{q^{m+1}L}\left(Q^{\circ m}(x) \right)\nonumber\\
&=q^m\left(\frac{Q^{\circ m}(x)}{q^m}-Q_{qL}\left(\frac{Q^{\circ m}(x)}{q^m}\right) \right)\\
&=q^m\left(\tilde{g}_m(x)-Q_{qL}\left(\tilde{g}_m(x) \right)\right).\label{eq:mthlayer}
\end{align}
Since $\tilde{g}_m(x)\in L$, we clearly have that $g_m(x)\in q^m(L\cap q\m{V})=q^m \m{A}_q$. Furthermore, we can use the standard Voronoi encoder/decoder~\cite{ConwaySloane83},~\cite[Algorithms 1 and 2]{ordentlich2024optimal} for representing $\tilde{g}_m(x)-Q_{qL}\left(\tilde{g}_m(x) \right)\in\m{A}_q$ using $d\log_2(q)$ bits, and mapping those bits back to $\m{A}_q$. Algorithm~\ref{alg:cap} implements the encoder in our scheme. It recursively computes $\{\tilde{g}_m\}_{m=0}^{M-1}$ and represents each of them using the sequences $\{b_m\}_{m=0}^{M-1}$, each of length $d\log_2(q)$ bits. The corresponding decoder is given in Algorithm~\ref{alg:decode}. It recovers from each vector $b_m$ the corresponding point $\tilde{g}_m(x)-Q_{qL}\left(\tilde{g}_m(x)\right)$ in $\m{A}_q$, and outputs the reconstruction
\begin{align}
\hat{x}=\sum_{m=0}^{M-1} q^m \left[\tilde{g}_m(x)-Q_{qL}\left(\tilde{g}_m(x)\right) \right]=\sum_{m=0}^{M-1} g_m(x). 
\label{eq:telescope}
\end{align}

\begin{lemma}
Let $x\in\RR^d$ and let $\hat{x}\in L$ be its reconstruction using the hierarchical nested-lattice quantizer, that is 
the output of Algorithm~\ref{alg:decode} applied on the output of Algorithm~\ref{alg:cap}.  Then $\hat{x}=Q_L(x)$ iff $Q^{\circ M}(x)=0$.
\label{lem:overload}
\end{lemma}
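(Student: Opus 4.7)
\textbf{Proof plan for Lemma~\ref{lem:overload}.} The plan is to observe that the reconstruction formula~\eqref{eq:telescope} is a telescoping sum in disguise, and then read off the claim. First I would verify that Algorithm~\ref{alg:decode} indeed produces $\hat{x}=\sum_{m=0}^{M-1} g_m(x)$. For each $m$, the encoder transmits $b_m=[G^{-1}\tilde{g}_m(x)]\bmod q \in [q]^d$, which is precisely the Voronoi-code label of $\tilde{g}_m(x)-Q_{qL}(\tilde{g}_m(x))\in\m{A}_q$ in the sense of~\cite{ConwaySloane83} and~\cite[Alg.~1, 2]{ordentlich2024optimal}. The decoder step $x_m\gets G\cdot b_m - q\,Q_L((G\cdot b_m)/q)$ is exactly the inverse Voronoi map, so $x_m=\tilde{g}_m(x)-Q_{qL}(\tilde{g}_m(x))$, and hence $q^m x_m=g_m(x)$ by~\eqref{eq:mthlayer}. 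Summing over $m$ yields $\hat{x}=\sum_{m=0}^{M-1} g_m(x)$.

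Next I would exploit the definition $g_m(x)=Q^{\circ m}(x)-Q^{\circ(m+1)}(x)$ to telescope the sum:
\begin{align}
\hat{x}=\sum_{m=0}^{M-1}\bigl(Q^{\circ m}(x)-Q^{\circ(m+1)}(x)\bigr)=Q^{\circ 0}(x)-Q^{\circ M}(x)=Q_L(x)-Q^{\circ M}(x),\nonumber
\end{align}
using the convention $Q^{\circ 0}(x)=Q_L(x)$ adopted implicitly when $\tilde g_0(x)=Q_L(x)$.

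From this identity the claim is immediate: $\hat{x}=Q_L(x)$ if and only if $Q^{\circ M}(x)=0$, which is exactly the overload indicator set by the encoder.

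I do not expect any genuine obstacle: the only step that requires a bit of care is confirming that the Voronoi encoder/decoder pair correctly transmits the coset representative $\tilde{g}_m(x)-Q_{qL}(\tilde{g}_m(x))\in\m{A}_q$, and this is guaranteed because $\tilde{g}_m(x)\in L$ by construction (so the residual always lies in $\m{A}_q$, with no additional overload possible within a single layer). Everything else is bookkeeping with the telescoping identity.
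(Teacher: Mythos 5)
Your proof is correct and follows essentially the same telescoping argument as the paper: the paper establishes the identity $\hat{x}=\sum_{m=0}^{M-1}g_m(x)$ in~\eqref{eq:telescope} just before the lemma, and the proof then telescopes $g_m(x)=Q^{\circ m}(x)-Q^{\circ(m+1)}(x)$ to obtain $\hat{x}=Q_L(x)-Q^{\circ M}(x)$, exactly as you do. The extra bookkeeping you include (verifying that Algorithm~\ref{alg:decode} really outputs $\sum_m g_m(x)$) is sound and matches the surrounding discussion in the paper.
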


\begin{proof}
Telescoping $\sum_{m=0}^{M-1} g_m(x)$ in~\eqref{eq:telescope}, we observe that
\begin{align}
\hat{x}=Q^{\circ 0}(x)-Q^{\circ M}(x)=Q_L(x)-Q^{\circ M}(x).    
\end{align}
Thus, $\hat{x}=Q_L(x)$ iff $Q^{\circ M}(x)=0$.
\end{proof}

As a consequence of Lemma~\ref{lem:overload}, we see that the binary variable $\mathrm{OverloadError}$ computed in Algorithm~\ref{alg:cap} indeed satisfies $\mathrm{OverloadError}=\Ind\{\hat{x}\neq Q_L(x)\}$.

\textbf{Scaling and dithering:} In order to get the smallest distortion using the proposed scheme, as well as when using standard Voronoi codes, one scales the constellation by a factor $\beta>0$. The granular error is $\sigma^2(\beta L)=\beta^2\sigma^2 (L)$ and is increasing in $\beta$. On the other hand, the overload probability is decreasing in $\beta$. Thus, one typically looks for the smallest $\beta$ for which the overload probability is ``small enough''. Often, one also uses a \emph{dither} vector $z\in\m{V}$ to shift the constellation. Specifically, to implement scaling by $\beta>0$ and dithering by $z\in \m{V}$, we set the input to our encoder as $\frac{x}{\beta}-z$ instead of $x$, and the output of our decoder as $\beta (\hat{x}+ z)$ instead of $\hat{x}$.

\textbf{Overload avoidance mechanism:} We employ an overload avoidance mechanism similar to the one introduced in~\cite{ordentlich2024optimal}. Specifically, we first set $\beta=\beta_0$ and $T=0$. We input $\frac{x}{\beta}-z$ to our encoder, and check whether $\mathrm{OverloadError}=0$. If so, we send the encoded bits $b_0,\ldots,b_{M-1}$. Otherwise, we set $\beta\gets 2^\alpha\beta$, $T\gets T+1$, where $\alpha>0$ is a parameter of the algorithm, and try again, and so on until $\mathrm{OverloadError}=0$ and we send the encoded bits $b_0,\ldots,b_{M-1}$. We also send an entropy-coded description of $T$ to the decoder, using $\approx H(T)$ bits. In total, the expected rate of this scheme is $M\log_2(q)+\frac{H(T)}{d}$. The decoder, in turn, reconstructs $T$ from the entropy coded bits and $\hat{x}$ from $b_0,\ldots,b_{M-1}$, and outputs $2^{\alpha T} \beta_0 (\hat{x}+z)$.

\textbf{Successive Refinement:} In some situations one may wish to start by recovering the source $x$ with low resolution and gradually improve the resolution as more bits describing it become available. If we run the for-loop in Algorithm~\ref{alg:decode} from $m=M-1$ down to $m=0$, we will get a gradually improving description. Moreover, if we only recover $t<M$ layers, by running that for-loop from $i=M-1$ down to $i=M-t$, we will recover $Q^{\circ (M-t)}(x)-Q^{\circ M}(x)$, which equals $Q^{\circ (M-t)}(x)$ if overload did not occur.

\subsection{Bounds and numerical results}

\begin{figure}[!tbp]
  \centering    \includegraphics[width=0.45\textwidth]{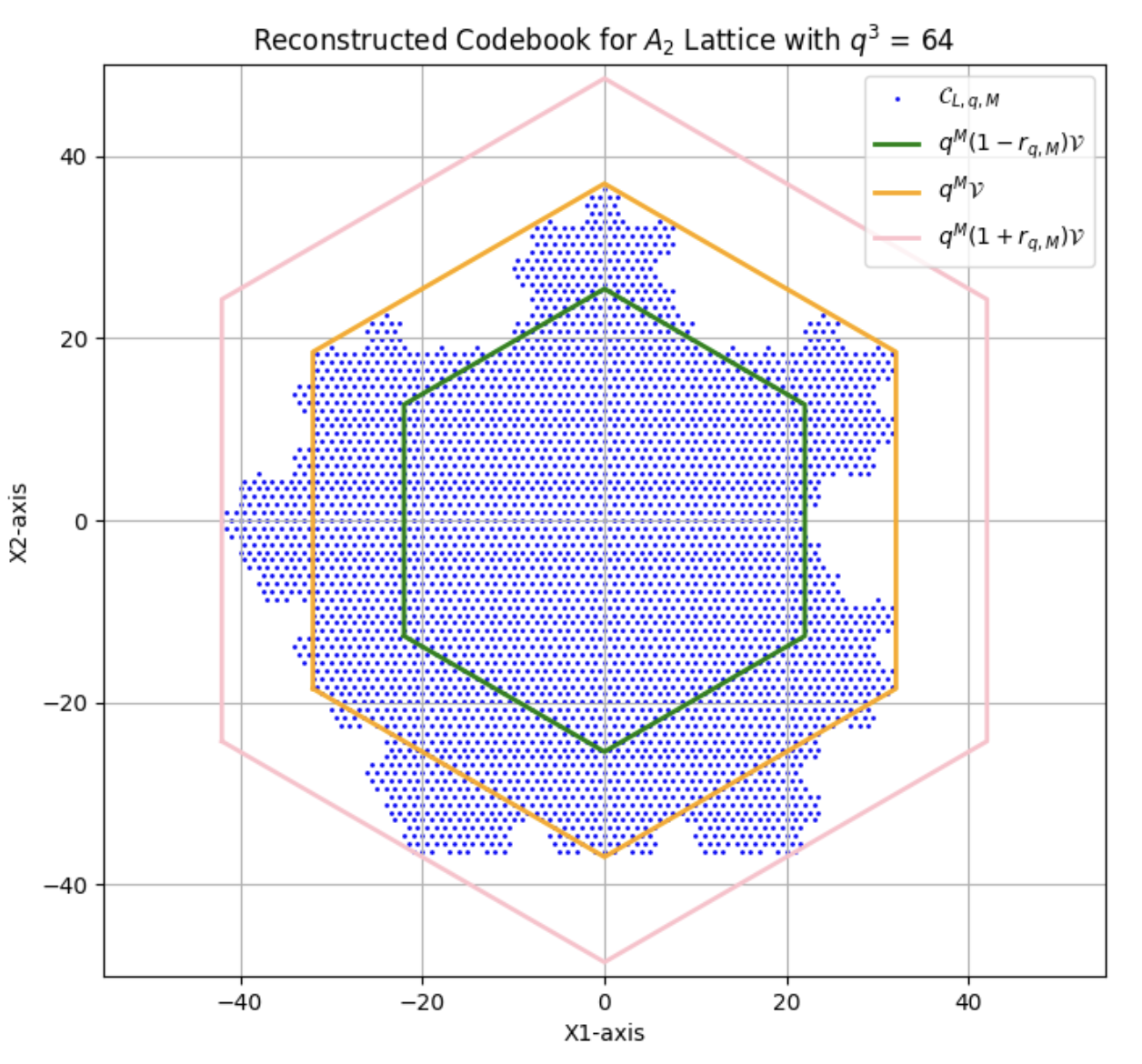}
  \caption{Codebook of the hierarchical nested lattice quantizer with $L=A_2$, $q=4$ and $M=3$.}
  \label{fig:codebooks}
\end{figure}

Let $\m{P}_{q,M}=\{x\in\RR^d~:~Q^{\circ M}(x)=0\}$,
and note that $\m{P}_{q,M}$ is a fundamental cell of the lattice $q^M L$. Note that furthermore $\m{C}_{L,q,M}=L\cap \m{P}_{q,M}$, and it therefore follows that $\m{C}_{L,q,M}\cong L/q^M L$. A Voronoi code $\m{A}_{q^M}$ selects the coset representatives with the lowest energy, and consequently approximately minimizes the overload probability among all choices of coset representatives of $L/q^M L$. While the hierarchical scheme described in the previous subsection has several advantages over a Voronoi code of the same rate and same lattice $L$, it selects the coset representatives of $L/q^M L$ as $L\cap \m{P}_{q,M}$, and is therefore inferior to the corresponding Voronoi code in terms of overload probability. The following result shows that the overload probability of $\m{C}_{L,q,M}$ is nevertheless upper (resp. lower) bounded by the overload probability of a Voronoi code whose rate is $\log\left(1\mp \frac{1-q^{1-M}}{q-1}\right)$ bits smaller (resp. greater).

\begin{lemma}
\begin{align}
\m{A}_{q^M(1-r_{q,M})} \subset \m{C}_{L,q,M}\subset \m{A}_{q^M(1+r_{q,M})},
\label{eq:regioninclusion}
\end{align}    
where $r_{q,M}=q^{-M}\sum_{m=1}^{M-1} q^m=\frac{1-q^{1-M}}{q-1}$.
\end{lemma}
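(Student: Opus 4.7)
The plan is to prove the two set-inclusions separately, exploiting the Minkowski-sum representation $\m{C}_{L,q,M}=\sum_{m=0}^{M-1}q^m\m{A}_q$ for the outer one, and a coset-representative argument for the inner one.

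For the outer inclusion I would take any $y\in\m{C}_{L,q,M}$, write it as $y=\sum_{m=0}^{M-1}q^m\lambda_m$ with $\lambda_m\in\m{A}_q\subset q\m{V}$, and use that $\m{V}$ is convex and contains the origin, so that non-negative dilates of $\m{V}$ add exactly under Minkowski sum. This gives
\[
y\in\sum_{m=0}^{M-1}q^m\cdot q\m{V}=\left(\sum_{m=0}^{M-1}q^{m+1}\right)\m{V}=q^M(1+r_{q,M})\m{V},
\]
where the last equality uses $\sum_{m=0}^{M-1}q^{m+1}=(q^{M+1}-q)/(q-1)=q^M(1+r_{q,M})$. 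Since $y\in L$ as well, $y\in L\cap q^M(1+r_{q,M})\m{V}=\m{A}_{q^M(1+r_{q,M})}$.

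For the inner inclusion I would first establish that $\m{C}_{L,q,M}=L\cap\m{P}_{q,M}$ is a complete system of coset representatives of $L/q^M L$: iterating the lattice-shift identity $Q_L(x-\mu)=Q_L(x)-\mu$ for $\mu\in L$ yields $Q^{\circ m}(x-\mu)=Q^{\circ m}(x)-\mu$ whenever $\mu\in q^m L$, so for any $x\in L$ the point $x-Q^{\circ M}(x)$ lies in $L\cap\m{P}_{q,M}$ and represents the coset $x+q^M L$. Now given $x\in\m{A}_{q^M(1-r_{q,M})}$, I would assume for contradiction that $v:=Q^{\circ M}(x)\neq 0$; then $c:=x-v\in\m{C}_{L,q,M}$, and the outer inclusion yields $c\in q^M(1+r_{q,M})\m{V}$. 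Plugging $v/q^M\in L$ into the defining Voronoi inequality for $x$ at scale $q^M(1-r_{q,M})$, and $-v/q^M$ into that for $c=x-v$ at scale $q^M(1+r_{q,M})$, then squaring and simplifying, should produce the sandwich
\[
\tfrac{1-r_{q,M}}{2}\|v\|^2\leq \langle x,v\rangle\leq \tfrac{1-r_{q,M}}{2}\|v\|^2,
\]
forcing equality and thus placing $x$ on the perpendicular bisector between $0$ and $q^M(1-r_{q,M})(v/q^M)$, i.e., on a boundary hyperplane of $q^M(1-r_{q,M})\m{V}$.

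The main obstacle is precisely this boundary-equality case, and it is exactly what the paper's unstructured perturbation $\eps$ in the tie-breaking convention is designed to exclude: with $\eps$ in force, no lattice point $x\in\m{A}_{q^M(1-r_{q,M})}$ can lie on such a boundary hyperplane, so the strict Voronoi inequalities hold and the sandwich can only close at $v=0$, which gives $x=c\in\m{C}_{L,q,M}$ and completes the proof.
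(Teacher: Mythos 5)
Your proof of the outer inclusion is essentially identical to the paper's, and is correct.

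For the inner inclusion, you take a genuinely different route from the paper, and it has a gap. The paper argues directly: by tracking the quantization error added at each of the first $M-1$ layers, it shows $Q^{\circ(M-1)}(y)\in Q_L(y)+\sum_{m=1}^{M-1}q^m\m{V}$, and since $y=Q_L(y)\in q^M(1-r_{q,M})\m{V}$, the Minkowski sum collapses to $q^M\m{V}$, which forces $Q^{\circ M}(y)=0$ without ever leaving the interior of any Voronoi cell. Your argument instead picks a coset representative $c=x-Q^{\circ M}(x)\in\m{C}_{L,q,M}$, invokes the already-proved outer inclusion for $c$, and squeezes $\langle x,v\rangle$ between the two Voronoi inequalities. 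As you computed, both inequalities saturate: you land exactly on $\langle x,v\rangle=\frac{1-r_{q,M}}{2}\|v\|^2$, and you then assert that the paper's tie-breaking perturbation $\eps$ rules this boundary case out.

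That last step is where the proof breaks. The perturbation $\eps$ does not \emph{exclude} boundary lattice points from $\m{A}_{q^M(1-r_{q,M})}$; it merely \emph{decides} which side of each bisecting hyperplane is included, and that decision depends on the sign of $\langle\eps,\lambda\rangle$. If you carry the perturbed inequality through carefully you get $\langle x,v\rangle<\frac{1-r_{q,M}}{2}\|v\|^2-q^M(1-r_{q,M})\langle\eps,v\rangle$, while the outer inclusion (whose Minkowski-sum proof produces only the \emph{closed} cell $q^M(1+r_{q,M})\overline{\m{V}}$) gives $\langle x,v\rangle\ge\frac{1-r_{q,M}}{2}\|v\|^2$. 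These two are contradictory only when $\langle\eps,v\rangle\ge 0$; for a nonzero $v\in q^M L$ with $\langle\eps,v\rangle<0$, the sandwich is consistent and no contradiction is reached. So the statement ``with $\eps$ in force, no lattice point can lie on such a boundary hyperplane'' is not true, and the argument does not close. The paper's direct error-propagation proof avoids this boundary fragility entirely, which is why it is both shorter and sound; if you want to keep your coset-representative route, you would need a sharper version of the outer inclusion that also respects the tie-breaking convention, and then a sign analysis showing the two perturbed hyperplane conditions for $x$ and $c$ are genuinely incompatible, which is more work than the direct proof.
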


The green and pink scaled Voronoi regions in Figure~\ref{fig:codebooks} illustrate the inclusion in~\eqref{eq:regioninclusion}.

\begin{proof}
Recall that since $\m{V}$ is a convex set in $\RR^d$, we have that $\alpha \m{V}+\beta \m{V}=(\alpha+\beta)\m{V}$ for all $\alpha,\beta>0$.
To prove that $\m{C}_{L,q,M}\subset \m{A}_{q^M(1+r_{q,M})}$, note that    
\begin{align}
\m{C}_{L,q,M}&=\sum_{m=0}^{M-1} q^m (L\cap q\m{V})\subset L\cap \left(q\m{V}\cdot\sum_{m=0}^{M-1} q^m \right) \nonumber\\
&=L\cap ((q^M+\sum_{m=1}^{M-1} q^m)\m{V}).   
\end{align}
To prove $\m{A}_{q^M(1-r_{q,M})} \subset \m{C}_{L,q,M}$, it suffices to show that for any $y\in \m{A}_{q^M(1-r_{q,M})}$ it holds that $Q^{\circ M}(y)=0$.  We have,
\begin{align}
&Q^{\circ {(M-1)}}(y)=Q_{q^{M-1} L}\left(Q^{\circ {(M-2)}}(y) \right)\nonumber\\
&\in  Q^{\circ {(M-2)}}(y)+q^{M-1} \m{V}
\in \cdots\in  Q_L(y)+\sum_{m=1}^{M-1} q^m \m{V}\label{eq:Qcircinclusion}
\end{align}
and since $y\in \m{A}_{q^M(1-r_{q,M})}$, we also have $y=Q_L(y)\subset q^M(1-r_{q,M})\m{V}$. Consequently,
\begin{align}
Q^{\circ {(M-1)}}(y)\in q^M(1-r_{q,M})\m{V}+\sum_{m=1}^{M-1} q^m \m{V}=q^M \m{V},    
\end{align}
which implies that $Q^{\circ M}(y)=Q_{q^M L}\left(Q^{\circ {(M-1)}}(y)\right)=0$
\end{proof}

\textbf{Simulation results:} We plot the distortion-rate tradeoff attained by three different nested lattice quantization schemes: (a) Voronoi code with $r=q^M$ (b)Voronoi code with $r=q^M(1-r_{q,M})$ (c)The developed hierarchical scheme with $M$ layers and nesting ratio $q$. For all schemes we use the same base lattice $L$, and use the overload avoidance mechanism described above, with $\alpha=1/3$ and $\beta_0$ optimized separately for each of the three schemes. In the simulations we use $L=D_4$, $M=2$, and $q\in\{3,4,\ldots,9\}$. We quantize $N=5000$ iid realizations of a $\m{N}(0,I_4)$ source, and plot the obtained distortion-rate tradeoff for each scheme in Figure~\ref{fig:DRnoProd}. We also plot the Shannon limit $D(R)=2^{-2R}$ for reference. It can be seen that the hierarchical nested-lattice quantizer is indeed strictly better than the (lower-rate) Voronoi code with $r=q^M(1-r_{q,M})=q(q-1)$, and is almost as good as the reference Voronoi code with $r=q^M=q^2$. Remarkably, using the overload avoidance mechanism, both schemes are less than $1/2$ bit away from the fundamental limit, using a simple code of dimension $d=4$.


\section{Fast Inner-Product Computation}

\begin{figure*}
  \centering
  \begin{subfigure}[b]{0.45\textwidth}
    \includegraphics[width=\textwidth]{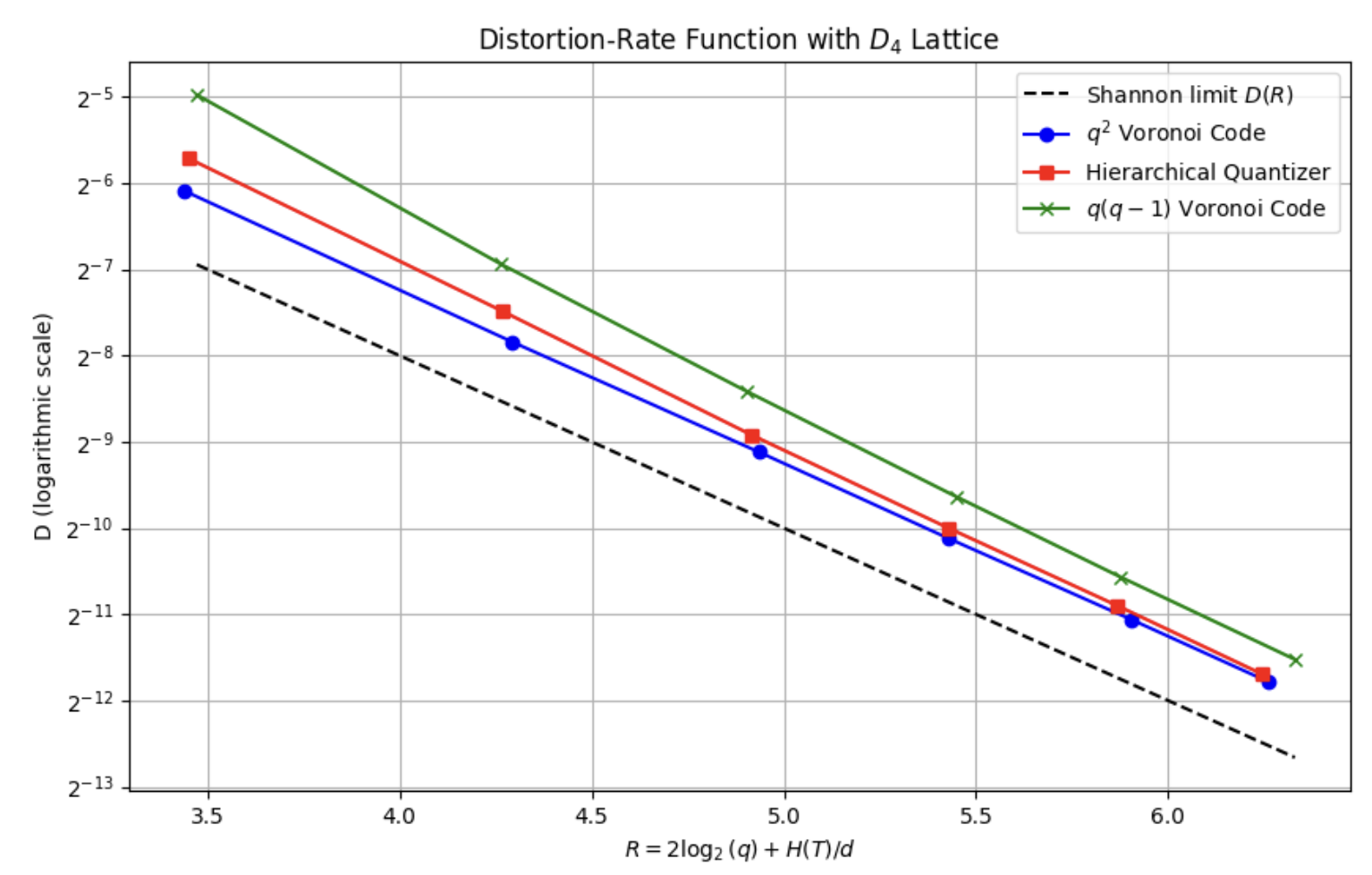}
    \caption{Vector quantization, $M=2$, $q$ varies}
    \label{fig:DRnoProd}
  \end{subfigure}
  \hfill
  \begin{subfigure}[b]{0.45\textwidth}
    \includegraphics[width=\textwidth]{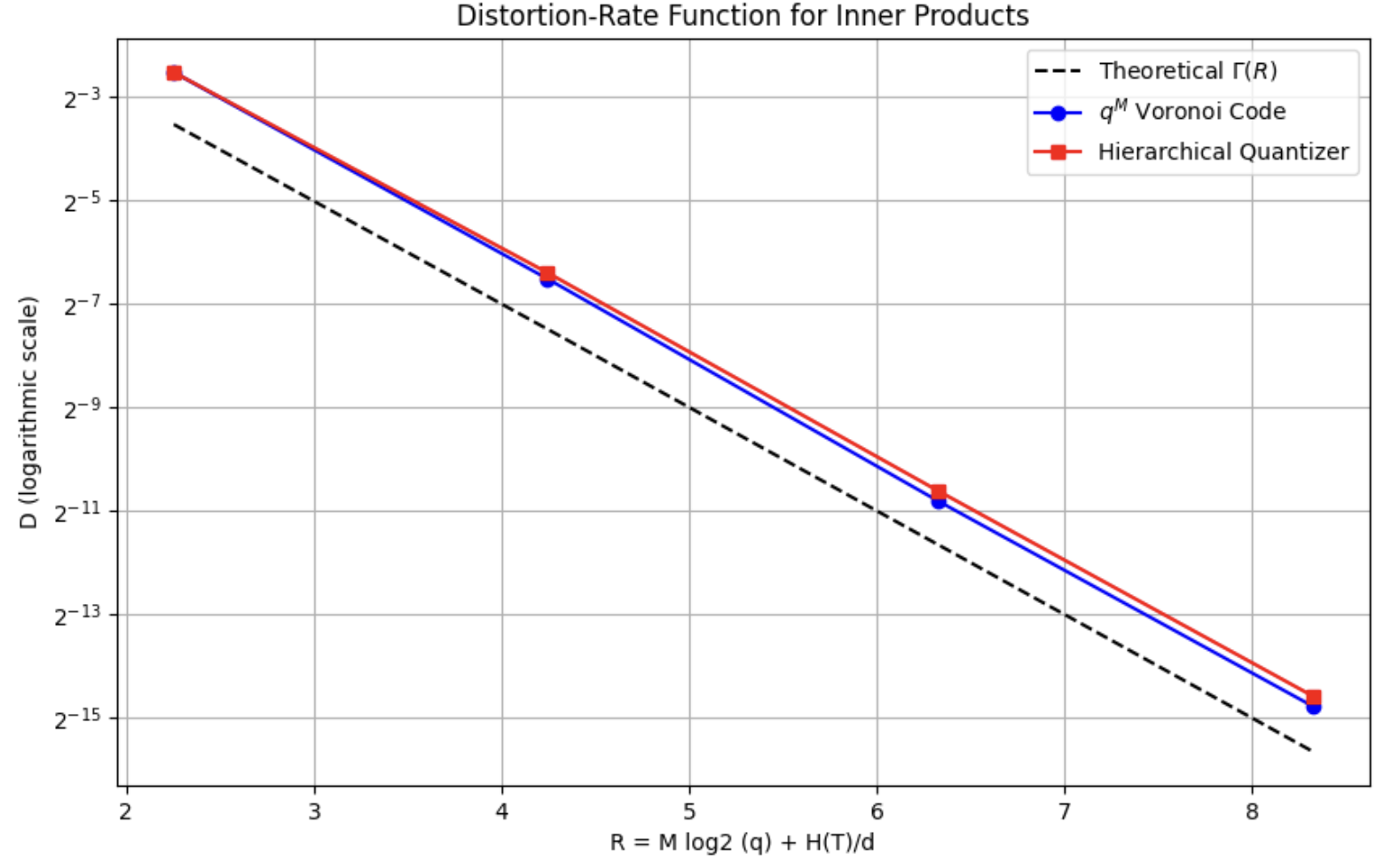}
    \caption{Inner product quantization, $q=4$, $M$ varies}
    \label{fig:DRIP}
  \end{subfigure}
  \caption{Distortion-Rate curves for nested lattice quantizers, $L=D_4$.}
  \label{fig:codebooks}
\end{figure*}

 Our main motivation for introducing the hierarchical scheme was to develop a fast decoder for the quantization for inner product computation problem. Let $x$ and $y$ be two vectors in $\RR^d$ that were quantized by the hierarchical scheme above. We have that
\begin{align}
\hat{x}=\sum_{i=0}^{M-1}q^i \hat{x}_i;~~\hat{y}=\sum_{j=0}^{M-1}q^j \hat{y}_j,   
\end{align}
where $\hat{x}_i,
\hat{y}_j\in\m{A}_q$ for all $i,j\in\{0,\ldots,M-1\}$. Consequently,
\begin{align}
\hat{x}^\top \hat{y}=\sum_{i=0}^{M-1}\sum_{j=0}^{M-1}q^{i+j}\cdot \hat{x}_i^\top \hat{y}_j. 
\end{align}
Note further that all inner products $\hat{x}_i^\top \hat{y}_j$ participating in the sum above, involve two vectors in $\m{A}_q$. Furthermore, the vector $\hat{x}_i$ is represented by the encoding vector $b_i(x)\in[q]^d$ in Algorithm~\ref{alg:cap} (in particular, we have $\hat{x}_i=G\cdot b_i(x)-q\cdot Q_L(G\cdot b_i(x))/q)$), and similarly $\hat{y}_j$ is represented by the encoding vector $b_j(y)\in[q]^d$. We can therefore pre-compute all $q^{2d}$ inner products in $\m{A}_q\times\m{A}_q$ and store the results in an LUT $\{\m{L}(b_i,b_j)\}_{b_i,b_j\in[q]^d\times [q]^d}$ indexed by two encoding vectors $b_i$, $b_j$. It therefore follows that for if our goal is to quantize $x,y\in\RR^d$ in order to compute an approximation for $x^\top y$, we can quantize each of them to $b_0(x),\ldots,b_{M-1}(x)$ and $b_0(y),\ldots,b_{M-1}(y)$ using Algorithm~\ref{alg:cap} with rate $R=M\log_2(q)$ bits per entry, and then decode the inner product via
\begin{align}
\hat{x}^\top \hat{y}= \sum_{i=0}^{M-1}\sum_{j=0}^{M-1}q^{i+j}\cdot \m{L}(b_i(x),b_j(y)).
\label{eq:LUTip}
\end{align}
The complexity of decoding the inner product $\hat{x}^\top \hat{y}$ therefore consists of querying the LUT $M^2$ times, $M^2$ scalar multiplications of the fetched LUT values by $q^{i+j}$ (if $q$ is a power of $2$ this can be implemented by simple bit-shifts) and $M^2$ additions. As mentioned above, the main gain of the hierarchical scheme is that we only need to store a single LUT of size $2^{d\log_2(q)}=2^{2d\frac{R}{M}}$, instead of an LUT of size $2^{2dR}$ that is needed if standard Voronoi codes are used. The ``price'' for the reduction of the LUT size is that we need to access it $M^2$ times.

\textbf{Scaling and dithering:} As in the previous section, one can scale and shift the constellation by encoding $\frac{x}{\beta_1}-z_1$ rather than $x$ and $\frac{y}{\beta_2}-z_2$ rather than $y$, where $\beta_1,\beta_2>0$ and $z_1,z_2\in\m{V}$. The decoder in turn, should output $\beta_1\beta_2(\hat{x}+z_1)^\top(\hat{y}+z_2)$ rather than $\hat{x}^\top \hat{y}$. Taking arbitrary dither vectors $z_1,z_2\in\m{V}$ does not permit to decode $(\hat{x}+z_1)^\top(\hat{y}+z_2)$ only by accessing the LUT $\m{L}$, as in~\eqref{eq:LUTip}. To circumvent this issue, we restrict the dither vectors to the constellation $q^{-1}\m{A}_q\subset\m{V}$. In particular, we choose two vectors $b_{z_k}\in[q]^d$, $k=1,2$ and set $z_k=q^{-1}\left[G\cdot b_{z_k}-q\cdot Q_L((G\cdot b_{z_k})/q)\right]$ as our dither vectors. We can then define $b_{-1}(x)=b_{z_1}$ and $b_{-1}(y)=b_{z_2}$ and compute 
\begin{align}
(\hat{x}+z_1)^\top (\hat{y}+z_2)= \sum_{i=-1}^{M-1}\sum_{j=-1}^{M-1}q^{i+j}\cdot \m{L}(b_i(x),b_j(y)).    
\end{align}

\textbf{One-sided quantization:} There are many practical scenarios where one needs to compute many inner products $y^\top x_k$, $k=1,\ldots,K$ between a fixed vector $y\in\RR^d$ and many other vectors $x_1,\ldots,x_K\in\RR^d$, $k\gg 1$. For instance, $x_1,\ldots,x_K$ can be vectors in a database,  and $y$ is a query for which one needs to find the approximate nearest neighbor in the database. In such situations, it is often the case that $y$ can be stored essentially in full resolution, but the vectors in the database must be quantized. Assuming we quantize each $x_k$ (assuming no dithering for simplicity) using our hierarchical scheme, we can compute each inner product as
\begin{align}
 y^\top \hat{x}_k=\sum_{i=0}^{M-1}  q^i \m{L}_y(b_i(x_k)),~~k=1,\ldots,K,
\end{align}
where $\m{L}_y$ is an LUT of size $q^d=2^{d\frac{R}{M}}$ consisting of the values of all inner products between $y$ and a vector in $\m{A}_q$.

\textbf{Arbitrary dimension via product codes:} In order to solve the quantization for inner product computation problem for vectors $x,y\in\RR^n$, $n\gg 1$, we use a product of quantizers for $\RR^d$. Let $K=n/d$, and assume for simplicity that $K$ is an integer. We may split $x$ and $y$ to $K$ chunks, each of size $d$, such that $x=[x^\top_1|\cdots|x^\top_K]^\top$, $y=[y^\top_1|\cdots|y^\top_K]^\top$. We can then quantize each chunk $x_k$ (resp. $y_k$) to $\hat{x}_k$ (resp. $\hat{y}_k$) using our hierarchical $d$-dimensional scheme, and decode the inner product as
\begin{align}
\hat{x}^\top \hat{y}=\sum_{k=1}^K \hat{x}_k^\top\hat{y}_k. 
\label{eq:IPviaPQ}
\end{align}

\textbf{Universality via random rotation:} Our hierarchical quantizer is tailored to a Gaussian iid source, and may not perform well when the source to be quantized does not resemble a Gaussian vector. In order to obtain a universal quantization for inner product scheme, whose performance depends only on $\|x\|_2$, $\|y\|_2$ and $|x^\top y|$, one should draw a random rotation matrix $S\in\RR^{n\times n}$, and quantize $S x/\|x\|_2$ (resp. $S y/\|y\|_2$) instead of $x$ (reps. $y$). The inner product between the quantized vectors should then be scaled back by $\|x\|_2 \|y\|_2$. See~\cite{ordentlich2024optimal} for details and and analysis.

\textbf{Simulation results:} We draw $N=5000$ pairs of iid vectors $X,Y\sim\m{N}(0,I_n)$ where $n=512$. We use $d=4$. Hence, for each pair, we quantize each vector by chunking it to $K=512/4=128$ pieces, quantize each piece using a lattice quantizer based on $L=D_4\subset\RR^4$ with  (a)Voronoi codes with $r=q^M$ where $q=4$ and $M$ varies (b)Hierarchical nested-lattice quantizer with $q=4$ and varying $M$, and computing the inner product $\hat{X}^\top\hat{Y}$ as in~\eqref{eq:IPviaPQ}. For the hierarchical scheme, we use the LUT approach~\eqref{eq:LUTip}. For both schemes we use the overlaod avoidance mechanism described above, with $\alpha=1/3$.
We define the distortion as $D=\frac{1}{n}\EE(X^\top Y-\hat{X}^\top \hat{Y})^2$, and also plot the fundamental limit $D\geq \Gamma(R)$ from~\cite{ordentlich2024optimal} (for $R>0.906$ we have $\Gamma(R)=2\cdot 2^{-2R}-2^{-4R})$. The results are plotted in Figure~\ref{fig:DRIP} for $M=1,2,3,4$. It is evident that the hierarchical scheme has performance very close to that of Voronoi codes with the same rate, and that both schemes are about half a bit away of the fundamental limit.



\section*{Acknowledgment}

This work was supported by the ISRAEL SCIENCE 
FOUNDATION (ISF),
grant No.1641/21. 




\newpage
\bibliographystyle{IEEEtran}
\bibliography{NestedLUTbib}

\end{document}